\newcommand{\pr}{\mathbb{P}}
\newcommand{\ex}{\mathbb{E}}
\newcommand\laweq{\mathrel{\overset{\makebox[0pt]{\mbox{\tiny law}}}{=}}}
\newtheorem{defi}{Definition}
\newtheorem{theo}{Theorem}
\newtheorem{prop}{Proposition}
\newtheorem{rem}{Remark}
\newcommand{\andi}[1]{\textcolor{purple}{#1}}
\begin{document}
\title{
Mitigation of liveness attacks in DAG-based ledgers
}

\author{
\IEEEauthorblockN{Darcy Camargo\IEEEauthorrefmark{3},  Andreas {Penzkofer}\IEEEauthorrefmark{1}, Sebastian {Müller}\IEEEauthorrefmark{2},  William Sanders\IEEEauthorrefmark{1}}
\IEEEauthorblockA{\IEEEauthorrefmark{1}IOTA Foundation, Berlin, Germany, Email: research@iota.org}
\IEEEauthorblockA{\IEEEauthorrefmark{2}Aix-Marseille Universit\'e, 
 CNRS, 
 I2M, UMR 7373, 13453 Marseille, France, Email: sebastian.muller@univ-amu.fr}
 \IEEEauthorblockA{\IEEEauthorrefmark{3}IOTA Foundation, Berlin, Germany, Email: darcy.camargo@iota.org}
}

\IEEEoverridecommandlockouts


\maketitle
\pagestyle{plain}

\begin{abstract}
The robust construction of the ledger data structure is an essential ingredient for the safe operation of a distributed ledger. While in traditional linear blockchain systems, permission to append to the structure is leader-based, in Directed Acyclic Graph-based ledgers, the writing access can be organised leaderless. However, this leaderless approach relies on fair treatment of non-referenced blocks, i.e. tips, by honest block issuers. 

We study the impact of a deviation from the standard tip selection by a subset of block issuers with the aim of halting the confirmation of honest blocks entirely. We provide models on this so-called \textit{orphanage} of blocks and validate these through open-sourced simulation studies. A critical threshold for the adversary issuance rate is shown to exist, above which the tip pool becomes unstable, while for values below the orphanage decrease exponentially. We study the robustness of the protocol with an expiration time on tips, also called garbage collection, and modification of the parent references per block. 
\end{abstract}


\section{Introduction}

Distributed Ledger Technology (DLT) must solve several core problems between participants in a network, called nodes, that do not necessarily trust each other.  Nodes must agree which transactions, from a set of transactions that is provided to them, should be added to a shared ledger. Since any pair of transactions may request conflicting operations -- a situation called double spending -- an agreement must be reached which transaction should be ultimately applied, a status we call \textit{confirmed}. All of the above must be handled in a byzantine environment, i.e.~in the presence of faulty nodes or malicious actors. 

To fulfil these requirements, in linear blockchain systems, such as Bitcoin, transactions are batched in blocks, which are appended to each other in a chain. A total order on the processing of transactions can be inferred through the chain, and consensus is reached by totally ordering the set of transactions. To guarantee security, the number of blocks per time interval is limited through a writing access control. The limitation can be probabilistic in time, as in Bitcoin through Proof-of-Work, or by round-based leader election, as in Proof-of-Stake systems. Since DLTs operate in a distributed setting, it is possible that simultaneously blocks are being created that attempt to extend the same endpoint, or \textit{tip}, of the chain. As a consequence, a fork is created in the blockchain. Additional rules, such as the longest chain rule in Bitcoin, must then be applied to identify the canonical chain. 

Increasingly, DLTs are also embracing Directed Acyclic Graph (DAG) data structures with their own appropriate consensus rules, writing access and block distribution. Nodes in such a system are enabled to issue blocks in parallel, with each block referencing several previous blocks. Consequently, the resulting graph infers a partial order to the blocks. Examples for such protocols are Aleph \cite{Gagol2019}, Sui \cite{sui2022} with Narwhal and Tusk \cite{Danezis2022}, 
and Iota \cite{muller2022}.

By the nature of the distributed setting of the protocol and the parallel writing access, as well as the possibility for faults or byzantine actors, conflicts may arise between issued transactions. 
Consensus on the accepted transactions can then be ensured by retrospective leadership election, e.g.~by using a common random coin \cite{Gagol2019, Danezis2022}, 
or by summation of votes \cite{muller2022,Avalanche19}. 
While the exact mechanism for the consensus  differs, the protocols have in common that the construction of the block DAG plays a crucial role in the confirmation of transactions and the performance of the consensus module. This construction is regulated through three aspects.

First, similar to the case of linear blockchain systems, to protect the resources in the network, a writing access mechanism is required that determines which nodes are permitted to add blocks to the DAG. 
Writing access can be granted in rounds in a permissioned system \cite{Gagol2019}, or through a separate peer-to-peer-based mechanism \cite{CongestionControl}. 

Second, each node must locally maintain a set of blocks that form the tips of the DAG, forming the so-called \textit{tip pool}, i.e. the set of blocks that the selection algorithm will choose from to approve when a new block is issued. This tip pool can also be understood as a distributed mempool that gives rise to a causally structured record of the communication between nodes. The reference scheme between blocks enables an efficient, reliable broadcast protocol, which can be a necessary prerequisite to the consensus protocol, e.g.~\cite{Danezis2022}.

Third, DLTs typically require a mechanism for nodes to prune old transactions.  One method of doing this is not considering blocks with a parent, i.e.~a directly referenced block, that is ``too far'' into the past.  Implementing this rule in a DAG-based protocol means that tips can \textit{expire} after a certain age.  Such blocks will receive no approvers, thus becoming \textit{orphaned}. For comparison, in \cite{Danezis2022}, this mechanism is described as garbage collection.


\subsection{Results}

In this work, we analyse the resilience of the protocol under attacks designed to (1) inflate the number of tips as a type of resource attack and (2) increase the number of tips which expire. Our analysis shows that these attacks are intertwined.
 
Maintaining a healthy tip pool buffer is necessary for a node to operate in any DAG-based DLT setting. In fact, a malicious actor intending to harm the protocol's consensus process could do so by inflating nodes' tip pools until they are forced to drop blocks from it, creating inconsistencies or crashing nodes.

Moreover, as there is no way to enforce the tip selection algorithm, that algorithm must be in Nash equilibrium.  If blocks are susceptible to an orphanage in mild attack scenarios or the honest setting, nodes will be highly motivated to develop new tip selection algorithms, which might jeopardise the stability of the Tangle.
 
In the first result of this article, Theorem~\ref{L-thm1},  we introduce an adversary that aims to inflate the tip pool by avoiding the selection of blocks that are tips. We then show that given the proportion $\mu$ of honest blocks being issued satisfies $\mu\leq 1/k$, where $k$ is the number of  uniformly approved blocks by each new block, then the tip pool increases over time and thus eventually bypasses the buffer capacity, yielding a successful attack, while also showing that when $\mu>1/k$, then the tip pool size has stability over time. 
 
In Theorem~\ref{theo: tip pool w/o expiration}, we investigate what a healthy tip pool looks like, both in the presence and in the absence of said attacker, showing that its size averages $\mu k\lambda h/(\mu k-1)$, while the rate of new blocks is $\lambda$, the network delay is $h$ and $\mu=1$ in the absence of the attacker. 
 
Following the result, in Theorem~\ref{theo: tip pool w/ expiration}, we propose the introduction of an expiration time $\Delta$ in the tip pool as a countermeasure against such an attack, preventing buffers from going over their capacity and giving the network a chance to recover.
We obtain that the tip pool size will average $L_0 \lambda$, where $L_0$ is the solution of the equation
 \begin{equation*}
     L_0=\frac{\mu k h}{\mu k -1 + e^{-\Delta \mu k/L_0}}.
 \end{equation*}

The aforementioned protection obtained through the expiration time introduces the possibility that blocks get removed from the tip pool without being referenced, thus becoming orphaned. 
Theorem~\ref{theo: expiration probability} shows that the probability of a block being orphaned due to expiration is bounded above by $$\exp\{-\Delta (\mu k-1)/h\},$$ which decays exponentially in $\Delta$ and is, thus, not problematic in terms of orphanage.

\subsection{Structure of the Paper}

In Section~\ref{sec: orphanage}, we discuss a specific attack that targets the tip pool of the nodes and the resulting requirement for the orphanage of blocks. In Section~\ref{sec: model}, we introduce a model for a Block DAG and the confirmation rules, which we employ to study the effects of the attack in Section~\ref{sec: theoretical results}. In Section ~\ref{sec: simulation}, we validate the model against simulation results. In Section~\ref{sec: discussion}, we provide an outlook for further research. Finally, in Section~\ref{sec: conclusion}, we discuss the obtained results.

\section{Orphanage and Tip Pool Inflation}\label{sec: orphanage}

\textit{Orphanage} of a block represents the act of a block and its content not being written in the shared version of the ledger that is eventually agreed by all nodes. There are several ways a block could be orphaned, which we will address in the following.

\subsection{Confirmation}

In blockchains, blocks are typically considered to be confirmed, i.e. are part of the ``common'' ledger, if they obtained enough support. For example, in Bitcoin, a block that is part of the longest chain and approved by a certain number of blocks is considered confirmed. Similarly, in DAG-based protocols, e.g. \cite{Gagol2019, muller2022, Prism2019}, blocks must be approved by a sufficient number of blocks or a leader block. However, different to the case of linear chains, each block can create references to multiple blocks. A block that passes a given threshold of support obtains a status \textit{confirmed} and is added to the permanent ledger.

In order to protect the node hardware from excessive memory use, limitations on the amount of time an unconfirmed block is stored must be applied. As such, a block needs to be removed from the nodes' memory if it takes excessively long to reach the confirmation status, as there is no guarantee that the block would ever reach such a status. We say a block that is dropped by the protocol by this mechanism is subject to \textit{confirmation orphanage} or \textit{liveness failure}.

We note that while the protocol should allow for confirmation orphanage to prevent storage problems, the number of orphaned blocks issued by nodes that adhere to the protocol, i.e. are issued by honest nodes, should be minimised and not be detrimental to the confirmation process.

\subsection{Expiration Orphanage}


The specific manner in which the previously mentioned limitation is achieved differs from protocol to protocol. Here we assume that each block carries a timestamp from the time it was issued. Such a timestamp may be enforced within certain bounds, e.g. in Bitcoin \cite{nakamoto2008Bitcoin}, timestamps are enforced through the order of the chain. For simplicity, we assume in our model that the clocks of all nodes are synced and that the timestamps are then monotonic with respect to the issuance order. We then enforce an age difference $\Delta$, which is the maximum time difference a block may have to any of its parent blocks. Thus, if, for a given block, any parent block has a timestamp difference of more than $\Delta$, the block is considered invalid.

It is then possible that a block does not obtain any approvers for a period $\Delta$ after it is included in the tip pool, thus being subject to \textit{expiration orphanage}.  
This orphanage can additionally lead to situations where the future cone of blocks discontinues when all tips approving it undergo expiration orphanage. This so-called \textit{future cone orphanage} will be discussed in more detail in Section~\ref{sec: discussion}. 

\subsection{Tip pool inflation attack}

Consider the following attack, called \textit{tip inflation attack}, that attempts to increase the tip pool size with the aim of increasing the  orphanage probability of blocks. 
In this attack, an adversary called \textit{spammer} issues blocks that do not approve tips. Moreover, the adversary may attach such that no contribution to confirmations is exerted (e.g. by attaching to already confirmed blocks or only to their own blocks). This attack aims to inflate the tip pool size as much as possible. This inflation has different consequences for a protocol without and with expiration time.

First, in the variant \textbf{without expiration time}, the attack targets the stability of the tip pool, i.e. increasing the tips set size over time. After a while, the increasing size of the tip pool would become a resource problem for the hardware of the node. Furthermore, as the influx of approvals is limited, timely confirmation becomes a challenge in the presence of an increasing tip pool.
In Section~\ref{sec: theoretical results}, we prove that if the spammer has a large enough share of blocks, it leads to an inflation of the tip pool. 
In \cite{ferraro2022}, such an attack was investigated without considering an expiration time. It was shown that the attack becomes a resource problem if the adversary obtains a significant issuance rate. 

For the variant \textbf{with expiration time}, the expiration time limitation acts as a protection to the tip selection process. As the tip pool size is now limited, the most the attack can achieve is to increase the confirmation orphanage rate of blocks that are issued by honest nodes.
In Section~\ref{sec: theoretical results}, we will theoretically show the impact of the tip inflation attack and the expiration orphanage in the average tip pool size, while in Section~\ref{sec: simulation}, we will validate the results obtained by comparing them to simulations.


For a practical implementation of the tip inflation attack with an expiration time, we refer to Figure~\ref{fig:practicalOrphanageAttack}. Due to the expiration time, the adversary attaches to blocks out of the tip pool. Furthermore, to maximise the orphanage of honest blocks, the adversary could also select only its own tips and, thus, never approves any blocks from honest nodes.

\begin{figure}
    \centering
    \includegraphics[width=0.45\textwidth]{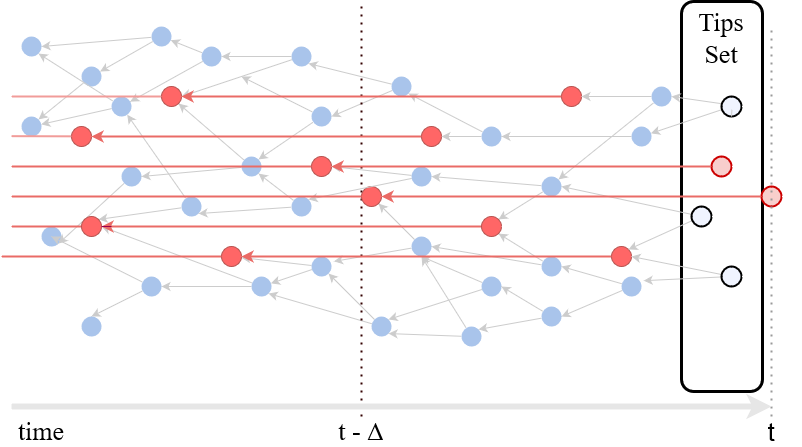}
    \caption{Practical realization of the tip inflation attack on the block DAG. The adversary  attaches to old blocks rather than to blocks that honest nodes typically find in their tip pools. }
    \label{fig:practicalOrphanageAttack}
\end{figure}



\section{Model}\label{sec: model}

Consider a set of users, also called \textit{nodes}, each with a weight (e.g.~reputation or stake) $\mathcal{N}=\{(N_1,w_1), (N_2,w_2), \ldots, (N_n,w_n)\}$. This weight is a limited resource that measures the contribution of the node in keeping the network working properly. Examples of reputation systems would be proof of work in Bitcoin and staked tokens in Proof-of-Stake networks.

Each node $N_i$ has an associated Poisson process with rate $\lambda_i = \lambda w_i/\sum_j w_j$. The parameter $\lambda$ represents the total rate of incoming blocks, while $\lambda_i$ is the proportional allocation of the resource $w_i$ that satisfies a fairness criterion by \cite{cullen2021}.

The construction of the block DAG is based on a marked Poisson process.
This is a Poisson process where each point is additionally associated with a vector of data, called \textit{mark}, of rate $\lambda$. Throughout this article, we will use the term \textit{mark} to refer to both the point (block) in the Poisson process and its associated data. Each mark $B_i$ is structured as
\begin{equation}
 \label{eq: B-def}   B_i=(t_i, \mbox{Ref}_i, \mbox{Node}_i)
\end{equation}

and ordered according to their appearance times in the Poisson process $t_i$, so $t_i<t_j$ if $i<j$. We also have $\mbox{Ref}_i$ as a subset of $k$ older marks, which will define the directed edges of our DAG, and $\mbox{Node}_i$ is the issuer node.

Consider a fixed $B_0=(-\infty,\emptyset,\emptyset)$, a fixed mark at the start of the process that we call \textit{genesis}. The view of the DAG at time $t\geq 0$ is given by $\mathcal{T}(t):= \{B_i: t_i\leq t \}$.

From the definition $\mathcal{T}(0)=\{B_0\}$. We also denote by the $\mathcal{T}$ the associated DAG built by defining directed edges between each mark and the elements of $\mbox{Ref}_i$.

\subsection{Tips and Approval}

We say that $B_i$ \textit{directly approves} $B_j$, denoted by $B_i \rightarrow B_j$, if $B_j\in \mbox{Ref}_i$. We also say that $B_i$ \textit{approves} $B_j$, and denote it by $B_i \rightsquigarrow B_j$, if there is sequence of blocks $B_{b_1}, B_{b_2},
\ldots, B_{b_M}$ such that $B_1 \rightarrow B_{b_1}$, $B_{b_m} \rightarrow B_{b_{m+1}}$ for $1\leq m \leq M-1$ and $B_{b_M} \rightarrow B_j$. This corresponds to the existence of a directed path in the DAG between the blocks $B_i$ and $B_j$. We define the set of approved blocks $\mathcal{A}(t) \subsetneq \mathcal{T}(t)$ as
\begin{align*}
    \mathcal{A}(t)&=\{B_i: B_i \in \mbox{Ref}_j \mbox{ for some $B_j$ with } t_j\leq t \}.
\end{align*}
The set of tips at time $t$, $\mbox{Tips}(t)$, is defined as $\mbox{Tips}(t)= \mathcal{T}(t)\backslash \mathcal{A}(t)$. We denote the number of tips at time $t$ by $L(t):=\vert \mbox{Tips}(t) \vert$.

The way blocks choose the $k$ elements of each set of references $\mbox{Ref}_i$ (see \eqref{eq: B-def}) defines the process. We consider the simplified model where each node, when selecting the blocks to approve, chooses only elements from the $\mbox{Tips}$ set, but with a fixed delay $h$ due to the characteristics of the network. Moreover, we assume that the selection will be done uniformly, i.e. $\mbox{Ref}_i=\mbox{Uniform}_k(\mbox{Tips}(t_i-h))$, where $\mbox{Uniform}_k$ represents the selection of $k$ independent copies of the uniform distribution with the given parameter, and $h$ represents the delay in perception that every node has regarding what is a tip due to the network delay.

To conclude the section, we remark that for the sake of succinctness, we will not present the entire probabilistic primitives, such as building the DAG space with its sigma algebra, defining the probability measure that represents the Poisson process and tip selection, and defining the filtration to make $\mathcal{T}(t)$ a proper stochastic process. Although of technical importance, those elements can be inferred from our definitions and play no role in the proofs to come. 

\section{Theoretical Result}\label{sec: theoretical results}

This section presents theoretical results on the tip inflation attack, its effects on the tip pool, and the expiration orphanage. We start by giving an important definition that mathematically represents the long-term stability of the process: 
\begin{defi}[Asymptotic Stationarity] We say a process $X_t$ is asymptotically stationary if there exists a random variable $X$ such that $X_t \to X$ almost surely as $t\to \infty$.
\end{defi}

\subsection{Stationarity}

According to the DAG model we defined, tips may be classified into three different categories at any point $t$ in time according to their tip status and the local view of the DAG. For this purpose, recall that if the DAG is at time $t$, each node has a local view of the DAG at time $t-h$. With this, tips may be classified as:
\begin{enumerate}
    \item \textbf{Hidden Tips.} Blocks that were issued in the interval $(t-h;t]$ and thus are still not known to other nodes due to the network delay.
    \item \textbf{Real Tips.} These are blocks that were issued before time $t-h$ and did not receive any approval from their issuance time until time $t$. The number of real tips at time $t$ is denoted by $R(t)$.
    \item \textbf{False Tips.} Finally, those are blocks that were issued before time $t-h$, did not receive any approval until time $t-h$ but received at least one approval in the interval $(t-h;t]$. Since the block that approved them is not visible to other nodes due to network delay, this block is considered a tip by nodes even though it is already referenced. The number of false tips at time $t$ is denoted by $F(t)$.
\end{enumerate}

We will show under which conditions the tip pool size will have  an increasing  or reducing tendency (positive or negative drift). In situations with negative drift, then by supermartingale arguments, we can conclude the convergence to a proper random variable, which is exactly the asymptotic stationarity we are looking for. 

Consider the discrete-time process $X_n$, defined as the tip pool sizes after the arrival of each block: $X_n=L(t_n)$. The long-term behaviour in $n$ of $X_n$ is equivalent to the one  in $t$ from $L(t)$. 

\begin{theo}\label{L-thm1} For any $x>0$ it holds almost surely that
\begin{equation}
    \label{eq: Xnlower}\ex[X_{n}-X_{n-1}\vert X_{n-1}=x] > 1- \mu k,
\end{equation}
and thus if $\mu k\leq 1$ the process of tips increases unbounded as $t\to\infty$, diverging to infinite. Moreover let $\mu k> 1$ and fix $0<\varepsilon'<\mu k -1$. Then there exists $x'$ depending on $\varepsilon'$ such that if $x>x'$ it holds almost surely that
\begin{equation}
   \label{eq: Xnupper}\ex[X_{n}-X_{n-1}\vert X_{n-1}=x] \leq -\varepsilon'.
\end{equation}
Thus if $\mu k> 1$, the tip pool size has negative drift for large values and is asymptotically stationary as $t\to \infty$. 
\end{theo}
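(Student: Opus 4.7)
The plan is to analyse the one-step change $\Delta_n := X_n - X_{n-1}$ by decomposing the pre-arrival tip pool as $L(t_n-) = X_{n-1} = R + H$, where $R$ counts real tips and $H$ counts hidden tips at $t_n-$, and using that when $B_n$ is honest its references are $k$ uniform draws with replacement from $\mathrm{Tips}(t_n-h)$, which has size $R + F$ (real plus false tips). Writing $\Delta_n = 1 - D_n$, where $D_n$ is the number of distinct real tips referenced by $B_n$, one has $D_n = 0$ when $B_n$ is adversarial (probability $1-\mu$) and $D_n \le k$ when $B_n$ is honest (probability $\mu$).

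For the lower bound \eqref{eq: Xnlower}, the crude bound $D_n \le k$ immediately yields $\ex[\Delta_n \mid X_{n-1}=x] \ge \mu(1-k) + (1-\mu) = 1 - \mu k$, and strictness follows because the events ``two of the $k$ uniform selections coincide'' and ``at least one false tip is selected'' each have strictly positive probability and each forces $D_n < k$. Once $\ex[\Delta_n \mid X_{n-1}=x] \ge 1-\mu k + \delta$ for some $\delta>0$ is obtained, the case $\mu k < 1$ gives $X_n/n \to 1 - \mu k > 0$ almost surely by a martingale strong law applied to the bounded-increment centred differences, so $X_n \to \infty$; the borderline case $\mu k = 1$ follows from the strict positivity of the drift at every level via a Lamperti-type recurrence argument.

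For the upper bound \eqref{eq: Xnupper}, fix $0 < \varepsilon' < \mu k - 1$. The key estimate is that $F$ and $H$ are stochastically dominated, uniformly in the history, by counts tied to arrivals in $(t_n - h, t_n]$, so that their expectations are bounded by a constant $C = C(\mu,\lambda,k,h)$ independent of $x$: each false tip corresponds to a reference made by some honest block arriving in $(t_n-h,t_n]$, giving $F \le k\, N^{\mathrm{h}}_{(t_n-h,t_n]}$, while $H$ is the total arrival count in $(t_n-h,t_n]$, which by Slivnyak's theorem has Palm law $\mathrm{Poisson}(\lambda h)$ given an arrival at $t_n$. Since $R = x - H$ under $\{X_{n-1}=x\}$, Markov's inequality gives $\pr(R < x/2 \mid X_{n-1}=x) \le 2C/x$. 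Combining the expansion $1 - (1-1/(R+F))^k \ge k/(R+F) - k^2/(2(R+F)^2)$ with $R/(R+F) \ge 1 - F/R$ then yields
\[
 \ex[D_n \mid X_{n-1}=x, \text{honest}] \ge k - k\,\ex[F/R \mid X_{n-1}=x] - \tfrac{k^2}{2}\,\ex[1/R \mid X_{n-1}=x],
\]
and both conditional expectations tend to $0$ as $x \to \infty$, so $\ex[\Delta_n \mid X_{n-1}=x] \to 1 - \mu k$ and the bound $\le -\varepsilon'$ holds for $x$ larger than some $x'$. Together with boundedness of the increments $\Delta_n \in \{1-k,\ldots,1\}$, this negative drift outside a bounded set yields positive recurrence of the underlying Markov configuration process via a Foster--Lyapunov argument, and hence asymptotic stationarity of $L(t)$.

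The main technical obstacle is controlling $\ex[F/R \mid X_{n-1}=x]$ and $\ex[1/R \mid X_{n-1}=x]$ uniformly in $x$, since the conditioning event depends on the entire history and could in principle bias the recent arrival pattern. This is resolved by observing that the stochastic upper bounds on $F$ and $H$ depend only on the marked Poisson process restricted to $(t_n-h,t_n]$, whose law is invariant under the Palm conditioning on an arrival at $t_n$; the rare event $\{R < x/2\}$ is handled by a separate crude estimate using that $\Delta_n$ is uniformly bounded.
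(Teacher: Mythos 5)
Your proposal follows essentially the same route as the paper's proof: the one-step drift is written as $1$ minus $\mu$ times the expected number of distinct real tips removed, the lower bound comes from the crude bound by $k$ together with the positive probability of collisions among the $k$ uniform draws, and the negative drift for large $x$ comes from controlling the hidden and false tips via the Poisson count of arrivals in $(t_n-h,t_n]$ so that the ratio of real to perceived tips tends to $1$. You are in fact a bit more careful than the paper in a couple of spots (the second-order term in $1-(1-1/L)^k$, the borderline case $\mu k=1$, and flagging the conditioning issue for recent arrivals), but the decomposition into real/false/hidden tips, the key estimates, and the negative-drift/Foster--Lyapunov conclusion coincide with the paper's argument.
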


\begin{proof}
 The difference $X_{n}-X_{n-1}$ represents the changes that the block $B_{n}$ makes in the tip pool size; this block will add himself as a (hidden) tip, and in the process remove a certain quantity of tips. The amount removed is associated with how many fake tips and how many real tips were selected by his selection algorithm, and if the issuer was honest or the spammer (in such cases, $0$ tips are removed). Calling $D_n=X_n-X_{n-1}$ the variable representing this change, it's then straightforward that
 \begin{equation}\label{eq-Dn1}
     \ex[D_n\vert X_{n-1}=x]=1-\mu \ex[SR_{n}\vert X_{x-1}=x],
 \end{equation}
where $SR_n$ is the number of real tips selected by $B_{n}$. We start by proving \eqref{eq: Xnlower}, observe that at most all distinct blocks chosen in the tip selection  for $B_n$ will be real tips, and hence $SR_{n} \leq DB_{n}$, where $DB_n$ is the number of distinct blocks chosen by $B_n$. Now observe that by definition $DB_n\leq k$ and that $\pr[DB_n=k\vert X_{n-1}=x]<1$ since independently of the tip pool size at time $t-h$, the selection of $B_n$ is independent and uniform. With this 
\begin{equation}
    \label{eq-DB}\ex[SR_{n}\vert X_{x-1}=x] \leq \ex[DB_n\vert X_{n-1}=x]< k.   
\end{equation}
Applying \eqref{eq-DB} in \eqref{eq-Dn1} yields \eqref{eq: Xnlower}.

Now onto \eqref{eq: Xnupper}: When the selection happens, $B_{n}$ make a selection from $L(t_n-h)$, but these are split between $R(t_n)$ and $F(t_n)$, what essentially make $SR_{n}$ the number of distinct elements chosen among $R(t_n)$, and the number of selections is a conditional binomial with $k$ selections and event probability $R(t_n)/L(t_n-h)$. Applying this to \eqref{eq-Dn1} we get
\begin{equation} \label{eqD2}
    \ex[D_n\vert X_n=x]\leq 1-\mu k \ex[R(t_n)/L(t_n-h)\vert X_n=x].
\end{equation}
Let $K_{t_n}=| \{i: t_i \in (t_n-h,t_n]\}|$ be the number of blocks issued at most $h$ before $t_n$.
Observe now that $L(t_n-h)\geq L(t_n)-k N_{t_n}$ and $R(t_n)\geq L(t_n-h)-k K_{t_n}$, where the second result comes from the fact that in the worst case scenario each new issued block in the interval $(t_n-h,t]$ would remove $k$ tips from the ones in $L(t_n-h)$. Using this, \eqref{eqD2} becomes  
\begin{align}
\label{eqD3}  \ex[D_n\vert X_n=x]&\leq 1-\mu k \ex\Big[\frac{L(t_n)-2k K_{t_n}}{L(t_n)-kK_{t_n}}\vert X_n=x\Big]\\
&=1-\mu k \ex\Big[\frac{x-2kK_{t_n}}{x-kK_{t_n}}\vert X_n=x\Big].
\end{align}
Now, since $K_{t_n}$ is a Poisson variable with exponential tails, we can choose a constant $C_1$ such that $\pr[K_{t_n}<C_1]>(1-\varepsilon)$ and so
\begin{equation}
    \label{eqD3b}  \ex[D_n\vert X_n=x]\leq 1- \mu k(1-\varepsilon) \frac{x-2k C_1}{x-k C_1}.
\end{equation}
So if $\mu >1/k$, we can chose $\varepsilon$ small enough (and thus $C_1$ large enough) such that there is a $x'$ large enough that if $x>x'$, then $\ex[D_n\vert X_n=x]\leq -\varepsilon'<0$. This guarantees a negative drift, and, together with standard aperiodicity considerations, the claim follows. 
\end{proof}

\begin{rem}
Using the asymptotic negative drift to prove stationarity is inspired by \cite{kumar2022effect} and \cite{Mu:stability}, where the stationarity of the local tip pool sizes is studied. 
\end{rem}


\subsection{Number of Tips}

We present the following proposition

\begin{prop}\label{L-prop2} Let $\mu k>1$ and consider that, from Theorem~\ref{L-thm1}, $L(t)\to_{t\to\infty}L_{\infty}$, then $L_{\infty}$ satisfies $L_{\infty}/\lambda \to L_0$ almost surely as $\lambda \to \infty$, where $L_0$ is a constant.
\end{prop}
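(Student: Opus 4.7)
The plan is to combine the stationarity guaranteed by Theorem~\ref{L-thm1} with a fluid-limit (law-of-large-numbers) argument for the normalised process $Y^\lambda(t):=L(t)/\lambda$. When $\lambda$ is large, many blocks arrive per unit of time, so the per-time drift is well approximated by its mean and the stationary value of $Y^\lambda$ should concentrate on a deterministic constant $L_0$. The argument then reduces to identifying the unique fixed point of the fluid dynamics.

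First I would reuse the single-block drift identity from the proof of Theorem~\ref{L-thm1},
$$\ex[D_n\mid X_{n-1}=x]=1-\mu\,\ex[SR_n\mid X_{n-1}=x],$$
and pass to continuous time by multiplying by the arrival rate $\lambda$. In the regime $\lambda\to\infty$, with $L$ of order $\lambda$, the probability that two of the $k$ independent uniform selections coincide vanishes, so $\ex[SR_n]$ collapses to the deterministic quantity $k\cdot R(t_n)/L(t_n-h)$. This yields a delay differential equation for the fluid limit $y(t)=\lim_{\lambda}Y^\lambda(t)$ whose stationary points satisfy an algebraic balance equation.

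Second, in the stationary regime this balance forces $\mu k\cdot R_\infty/L_\infty=1$, i.e.~$R_\infty/L_\infty=1/(\mu k)$. The complementary fraction $F_\infty/L_\infty$ is the probability that a tip which was already visible at time $t-h$ has been referenced during the window $(t-h,t]$; in the $\lambda\to\infty$ limit this probability becomes a deterministic function of $L_0$, $h$, $\mu$, $k$ alone. Together with $R_\infty+F_\infty=L_\infty$ this gives a single scalar fixed-point equation for $L_0$ with a unique positive root whenever $\mu k>1$. The explicit evaluation of that root, namely $L_0=\mu k h/(\mu k-1)$, will then be carried out in Theorem~\ref{theo: tip pool w/o expiration}.

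The main obstacle is the rigorous justification of the interchange of the two limits, $t\to\infty$ at fixed $\lambda$ and then $\lambda\to\infty$, and in particular the upgrade from convergence in distribution of $Y^\lambda_\infty$ to almost-sure convergence. The natural route is to couple the marked Poisson processes corresponding to different values of $\lambda$ on a common probability space by thinning a master unit-rate Poisson process, and to leverage the uniform negative drift in Theorem~\ref{L-thm1} (which implies exponential ergodicity and hence tight stationary distributions) together with a functional law of large numbers for jump processes with bounded delayed interactions. Fluctuations of $Y^\lambda_\infty$ around $L_0$ should then be of order $\lambda^{-1/2}$, which gives almost-sure convergence along the coupled sequence and concludes the proof.
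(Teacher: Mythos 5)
The paper does not actually prove this proposition: it asserts it as a consequence of Theorem~\ref{L-thm1} plus a queueing interpretation of the tip pool (tips as customers) and defers the details to an external reference. Your fluid-limit programme is therefore a genuinely different and more explicit route, and its skeleton --- normalise by $\lambda$, show the per-block drift becomes deterministic because the collision probability among the $k$ uniform selections vanishes, and identify the stationary point of the limiting delay equation --- is the standard way to make that queueing heuristic rigorous. What it buys over the paper is an actual mechanism for the concentration claim; what it costs is that the hardest step is still only named, not executed: fluctuations of order $\lambda^{-1/2}$ give convergence in probability, not almost sure convergence over a continuum of values of $\lambda$, so even under the thinning coupling you would still need a concentration bound with summable tails (or monotonicity in $\lambda$) to finish. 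Since the paper itself never constructs a common probability space for different $\lambda$, flagging and proposing the coupling is already a contribution.

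One concrete warning about your second step. The balance $\mu k\,R_\infty/L_\infty=1$ is right, but the ``complementary fraction'' $F_\infty/L_\infty$ must not be evaluated as the exponential survival probability $1-e^{-\mu k h/L_0}$ of a tip over a window of length $h$: the blocks in $\mbox{Tips}(t-h)$ have heterogeneous exposure times (a block issued at $u\in(t-2h,t-h]$ only becomes selectable at $u+h$, hence is exposed for less than $h$), and that formula does not reproduce $L_0=\mu k h/(\mu k-1)$ (check $\mu k=2$). The correct closure is $F_\infty=\lambda h$: in stationarity first approvals occur at rate $\lambda$, and every block first approved in $(t-h,t]$ lies in $\mbox{Tips}(t-h)$, so $L_\infty=L_\infty/(\mu k)+\lambda h$ gives the stated root. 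The paper sidesteps this entirely in Theorem~\ref{theo: tip pool w/o expiration} by computing the mean sojourn time via Little's law. Note finally that identifying the value of $L_0$ is not required for Proposition~\ref{L-prop2}, which only asserts that the normalised stationary tip count concentrates on some constant; keeping that evaluation out of this proof would also keep the logical order of the paper intact.
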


This is a consequence of Theorem~\ref{L-thm1} and the interpretation of the tip pool as a queuing process, where tips represent customers in a system. A more detailed description of such a relation with further theoretical backing can be found in \cite{equilibria}.

\begin{theo}\label{theo: tip pool w/o expiration}[Tip pool size without expiration] If the rate of incoming blocks is all honest, i.e. $\lambda_H = \lambda$, then the number of tips satisfies almost surely as $\lambda\to \infty$
\begin{equation}\label{eqn:L_noAdversary}
L(t)\to \frac{k}{k-1}\lambda h.
\end{equation}
Moreover, under the attack of a spammer with rate $\lambda_S$ such that $\lambda=\lambda_H+\lambda_S$ and fixed $\mu:=\lambda_H/\lambda$, if  $\mu k>1$ the number of tips satisfies almost surely as $\lambda\to \infty$
\begin{equation}\label{eqn:L_ModelA}
L(t)\to \frac{\mu k}{\mu k-1}\lambda h.
\end{equation}
\end{theo}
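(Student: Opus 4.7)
The plan is to establish a steady-state flow balance in the fluid (large-$\lambda$) limit, matching the rate at which blocks enter the global tip pool $\mbox{Tips}(t)$ against the rate at which tips receive their first approval. By Theorem~\ref{L-thm1}, since $\mu k>1$, the process $L(t)$ is asymptotically stationary with limit $L_\infty$, and Proposition~\ref{L-prop2} reduces the task to identifying the deterministic constant $L_0 := \lim_{\lambda\to\infty} L_\infty/\lambda$.

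First, I would decompose $L(t)$ using the hidden/real/false classification. A block issued in $(t-h,t]$ cannot have been approved by time $t$: any potential approver $B_j$ with $t_j\le t$ picks its references from the local view at $t_j-h$, which does not yet contain such a recent block. Hence all hidden tips belong to $\mbox{Tips}(t)$, and by the law of large numbers for the underlying Poisson process the count $H(t)$ of hidden tips satisfies $H(t)/\lambda \to h$ almost surely in the fluid limit. Writing $L(t)=H(t)+R(t)$ (false tips are excluded from $\mbox{Tips}(t)$ since they are already approved in the global view), this yields $R(t)/\lambda \to L_0-h$.

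The second step is the balance equation. Blocks are added to $\mbox{Tips}(t)$ at rate $\lambda$. A tip leaves $\mbox{Tips}(t)$ precisely at its first approval. Each honest block (arrival rate $\mu\lambda$) selects $k$ references uniformly from $\mbox{Tips}(t-h)$, of size $L(t-h)$; the references that contribute \emph{first} approvals are exactly those hitting the real-tip subset of size $R(t)$. The expected number of distinct real tips selected by a single honest block equals
\begin{equation*}
 k\,\frac{R(t)}{L(t-h)} + O\!\left(\frac{1}{L(t-h)}\right),
\end{equation*}
since the probability of any collision among $k$ uniform draws from a pool of size $L(t-h)$ is $O(1/L(t-h))$, which vanishes in the fluid limit. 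Thus the asymptotic removal rate is $\mu\lambda k\,R(t)/L(t-h)$.

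In stationarity both $L(t)/\lambda$ and $L(t-h)/\lambda$ converge to the same constant $L_0$, so equating inflow and outflow gives
\begin{equation*}
 1 = \mu k\,\frac{L_0-h}{L_0},
\end{equation*}
from which $L_0 = \mu k h/(\mu k-1)$, yielding \eqref{eqn:L_ModelA}. The honest-only statement \eqref{eqn:L_noAdversary} is the special case $\mu=1$. The main obstacle is making the flow-balance argument fully rigorous: one must pass from the almost-sure scaling $L_\infty/\lambda \to L_0$ provided by Proposition~\ref{L-prop2} to convergence of empirical inflow and outflow rates along stationary trajectories. This calls for concentration estimates on the random reference counts (so that duplicate selections and sampling fluctuations are asymptotically negligible), together with a Palm-type ergodic argument on the stationary tip process to legitimise the use of time-averaged rates in place of the expectations above.
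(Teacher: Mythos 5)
Your argument is correct, but it takes a genuinely different route from the paper's. You close the system with a rate-conservation (zero-drift) identity: using the decomposition $L(t)=H(t)+R(t)$ into hidden and real tips, with $H(t)/\lambda\to h$, you equate the inflow rate $\lambda$ with the first-approval outflow rate $\mu\lambda k\,R(t)/L(t-h)$ to obtain $1=\mu k(L_0-h)/L_0$. This is essentially the statement that the stationary drift $\ex[D_n]=1-\mu\,\ex[SR_n]$ from the proof of Theorem~\ref{L-thm1} vanishes, so your derivation is a natural continuation of that computation and is arguably the more elementary of the two. The paper instead proves a Little's-law identity $L_0=\lim_{\lambda\to\infty}\lim_{i\to\infty}\ex\tau_i$ via the accumulated tip time $\Theta(t)=\int_0^t L(s)\,ds$, and then evaluates the expected sojourn time of a tagged tip as $h$ plus the first arrival of a non-homogeneous Poisson process with rate $\lambda_H p(L(t))$, arriving at the same fixed point $L_0-h=L_0/(\mu k)$. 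The two derivations are equivalent at the fluid level and carry the same (admittedly heuristic) burden of justifying interchanges of limits and the neglect of collisions among the $k$ uniform draws, which you correctly flag; your side observations (hidden tips cannot be approved by time $t$; false tips are not in the global $\mbox{Tips}(t)$) are accurate. What the paper's route buys that yours does not: in Theorem~\ref{theo: tip pool w/ expiration} the sojourn time is simply truncated at $\Delta$ and the same computation goes through, whereas your balance equation would acquire an additional expiration-outflow term $\lambda\,\pr[E_i\geq\Delta]$ whose evaluation requires the sojourn-time distribution anyway. For the present theorem, both approaches are complete at the same level of rigour.
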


 This result is already known for the scenario with the absence of spammers from \cite{thetangle}, \cite{properties}, and \cite{cullen2019-variable-delay}. While the technique used in \cite{thetangle} and \cite{properties} can be applied to our setting, we propose another technique that turns out  useful for the  proof of Theorem \ref{theo: tip pool w/ expiration}.

\begin{proof}
For each block $B_j$, denote the time it spends as a tip by $\tau_j$, i.e. the time until another block approves it. The main idea of this proof is to try to  apply the so-called Little's law from queuing theory \cite{little} to the tip selection problem, summing the tip time of blocks that existed and comparing the result with the average number of tips. Let us define the total tip time of the block DAG up to time $t$:
\[
\Theta(t) := \sum \limits_{B_i \in \mathcal{T}(t)} \min\{ \tau_i,t-t_i \}.
\]

The minimum term in the summand is not to let the tip time go over the current time (if $t_i+\tau_i>t$).

Since a tip is exactly a block whose tip time has not ended, we also have
\[
\Theta(t)=\int \limits_{0}^tL(s) ds.
\]
Now for any element $\omega$ of the sample space such that $L(t):= L(t,\omega) \to L_\infty (\omega)$, we have 
\begin{equation}
    \label{eq: int-diff}\frac{1}{t}\int \limits_{0}^t\vert L(s,\omega)-L_\infty (\omega) \vert ds \to 0
\end{equation}

This is straightforward since $\vert L(s,\omega)-L_\infty (\omega) \vert<\varepsilon$ implies that the left-hand side of \eqref{eq: int-diff} is also smaller than $\varepsilon$. Since from Theorem \ref{L-thm1} the sample path where \eqref{eq: int-diff} holds measure $1$, the almost sure limit
\begin{equation}
    \frac{1}{t}\int \limits_{0}^t L(s) ds \to L_\infty, \mbox{ as }t\to \infty,
\end{equation}
 holds. Finally, using Proposition \ref{L-prop2}, it holds almost surely that
 \begin{equation}
     \lim_{\lambda\to \infty} \lim_{t\to \infty} \frac{\Theta (t)}{t\lambda} = L_0.
 \end{equation}

Under the asymptotic stationarity of the number of tips, we have that $\tau_i$ are asymptotically identically distributed and with constant variance. Hence we can apply a strong law of large numbers to obtain
\begin{align}
        \nonumber\lim_{\lambda\to \infty}\lim_{t\to \infty}\frac{\Theta(t)}{\lambda t}&=\lim_{\lambda\to \infty}\lim_{t\to \infty}\frac{1}{\lambda t}\sum \limits_{B_i \in \mathcal{T}(t)} \min\{ \tau_i,t-t_i \}\\
       \label{eq: theta2}&= \lim_{\lambda\to \infty}\lim_{i\to \infty} \ex \tau_i.
\end{align}
From what we can conclude
\begin{equation}\label{eq: theta3}
    L_0=\lim_{\lambda\to \infty} \lim_{i\to \infty} \ex \tau_i.
\end{equation}

This concludes Little's law analogue to the DAG block issuance process. Observe that the limit on $i$ in \eqref{eq: theta3} is equivalent to a limit in $t$. 

We now continue to calculate $\ex \tau_i$. After the issuance of $B_i$, it stays as a hidden tip for time $h$, and then every new block in the process has a probability of selecting it (and thus finishing its time as a tip)
\begin{equation}\label{eq: pLt}
    p(L(t))=1-(1-1/L(t))^k,
\end{equation}
where $t$ is the issuance time for the block. This means that the processes of approvals of $B_i$, until time $h$ after the first approval, follows a non-homogeneous Poisson process $P_{B}(t)$ with rate function $\lambda p(L(t))$. The expectation of the first arrival in such a process (representing the quantity $\tau_i-h$ for us) is known, and by \eqref{eq: theta3}, we have
\begin{align}
  \nonumber L_0&-h =\lim_{\lambda\to \infty}\lim_{i\to \infty} \ex \tau_i-h \\
  \nonumber &=\lim_{\lambda\to \infty}\lim_{i\to \infty} \ex \int \limits_{0}^\infty x\lambda p(L(t_i+x))e^{-\int_0^x \lambda p(L(t_i+y))dy}dx  \\
  \label{eq: L-limit0}&=\lim_{\lambda\to \infty}\ex \int \limits_{0}^\infty x\lambda p(L_\infty)e^{-\int_0^x \lambda p(L_\infty)dy}dx\\
  \nonumber &= \lim_{\lambda\to \infty}\ex \int \limits_{0}^\infty x\lambda p(L_\infty)e^{-x\lambda p(L_\infty)}dx=\ex  \lim_{\lambda\to \infty} \frac{1}{\lambda p(L_\infty)}\\
  \label{eq: L-limit}&=\ex  \lim_{\lambda\to \infty}\frac{1}{\lambda \Big(1-(1-\frac{1}{L_\infty})^k \Big)}=\frac{L_0}{k},
\end{align}
from where we obtain $L_0 = k h/(k-1)$, concluding the proof in the honest scenario.

For the tip pool inflation scenario, consider a node (or nodes) $N_S$ performing the tip inflation attack with issuance rate $\lambda_S$, while the rest of the network is honest with rate $\lambda_H=\sum_{i\neq S}^{n}\lambda_i$, forming a total rate $\lambda=\lambda_S+\lambda_H$.

Observe that the average tip time for each tip does not change independently of the tip being from the spammer or honest nodes since the spammer ignores all tips when selecting a tip, and the honest nodes do not distinguish between them. This means that the calculations that led to \eqref{eq: theta3} still hold in this scenario. In contrast to the honest scenario, to calculate the expected tip time, here we need to consider that in the Poisson process $P_{B}(t)$ of approvals of a block, only the honest rate $\lambda_H$ has any chance of approving tips, changing then the rate of $P_{B}(t)$ to $\lambda_H p(L(t))=\mu \lambda p(L(t))$. Replicating the calculations that led to \eqref{eq: L-limit} with the new rate, we obtain 
\begin{align}
  \label{eq: L-limit2} L_0-h &=\ex  \lim_{\lambda\to \infty}\frac{1}{\mu\lambda p(L_\infty)}=\frac{L_0}{\mu k}
\end{align}
solving the equation yields $L_0 = h\mu k /(\mu k -1)$, which concludes the second part of the proof. 

\end{proof}

\begin{theo}\label{theo: tip pool w/ expiration}[Tip pool size with expiration]
Consider the same setup as in Theorem~\ref{theo: tip pool w/o expiration} with spammers performing a tip inflation attack. Moreover, let tips that stay in the tip pool for a period longer than $\Delta$ be dropped (i.e. expire). Under these conditions, the  tip pool size $L(t)$ behaves asymptotically, for large $t$ and large $\lambda$, as $L_0 \lambda$, where $L_0$ is the solution of the equation 
\begin{equation}\label{eq: L0}
    L_0=\frac{\mu k h}{\mu k -1 + e^{-\Delta \mu k/L_0}}.
\end{equation}
\end{theo}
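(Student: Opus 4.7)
The plan is to replay the Little's-law argument of Theorem~\ref{theo: tip pool w/o expiration}, modifying only the distribution of each block's tip lifetime to account for the expiration cutoff. First I would verify that asymptotic stationarity of $L(t)$ still holds under expiration: the drift inequality \eqref{eqD3b} used in Theorem~\ref{L-thm1} relied on an upper bound on the fraction of the tip pool that was ``real'', and since expiration is an extra mechanism that only removes tips from the pool, it can only strengthen the negative drift. The supermartingale/aperiodicity argument therefore carries through unchanged, and Proposition~\ref{L-prop2} continues to furnish a deterministic constant $L_0$ with $L(t)/\lambda\to L_0$.

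Next, I would define the effective tip lifetime $\tilde\tau_i:=\min(\tau_i,\Delta)$, where $\tau_i$ is the first honest-approval time of $B_i$ measured from its issuance. The identity $\int_0^t L(s)\,ds=\sum_{B_i\in\mathcal{T}(t)}\min(\tilde\tau_i,t-t_i)$ still holds with the same truncation used in the proof of Theorem~\ref{theo: tip pool w/o expiration}, and the same two-step passage to the limit in $t$ and then $\lambda$ that produced \eqref{eq: theta3} yields $L_0=\lim_{\lambda\to\infty}\lim_{i\to\infty}\ex\tilde\tau_i$. To evaluate the right-hand side I would reuse the non-homogeneous Poisson computation of \eqref{eq: L-limit0}--\eqref{eq: L-limit}: after the hidden delay $h$, the honest-approval process has asymptotic rate $\lambda_H\,p(L_\infty)\to \mu k/L_0$, so the residual waiting time is asymptotically $\mathrm{Exp}(\mu k/L_0)$. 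Integrating its survival function against the truncation at $\Delta$ gives a closed-form expression which, substituted into $L_0=\ex\tilde\tau_i$, rearranges into precisely \eqref{eq: L0}. A final monotonicity check on the right-hand side of \eqref{eq: L0} as a function of $L_0$ would establish uniqueness of the fixed point.

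The main technical obstacle is the same interchange of limits that was already implicit in \eqref{eq: L-limit0}, namely pulling the random quantity $L(t_i+y)$ out of the exponent in the Poisson rate and replacing it by the deterministic $L_\infty$. Under expiration this step is noticeably more delicate because the relevant integration window $[t_i,t_i+\Delta]$ has \emph{fixed} length, rather than shrinking like $\lambda^{-1}$ as it effectively did in the non-expiration case; during a window of this size a non-negligible number of tips can disappear by expiration, so the fluctuations of $L(t)$ over the window have to be genuinely controlled. I expect a dominated-convergence argument, combined with the concentration provided by the modified stationarity in the first step and a Poisson tail bound analogous to the one used in \eqref{eqD3b}, to push this through cleanly without substantively changing the structure of the calculation.
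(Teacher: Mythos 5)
Your proposal is correct and follows essentially the same route as the paper: the truncated-lifetime version of the Little's-law identity, the asymptotic rate $\lambda_H\, p(L_\infty)\to\mu k/L_0$ from \eqref{eq: L-limit2}, and the expectation of an exponential waiting time censored at $\Delta$ reproduce exactly \eqref{eq: tauDelta2} and hence \eqref{eq: L0}, while your added remarks on stationarity under expiration and on uniqueness of the fixed point are sound refinements the paper leaves implicit. One small caution: your written definition $\tilde\tau_i=\min(\tau_i,\Delta)$ with $\tau_i$ measured from issuance would effectively truncate the post-visibility wait at $\Delta-h$ rather than $\Delta$; the paper's convention (and the one your subsequent computation actually uses) is $\tau_i\laweq h+\min\{E_i,\Delta\}$, i.e.\ the expiration clock runs only after the hidden period $h$.
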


\begin{proof}
In this scenario, we also assume the existence of nodes with the total rate $\lambda_S$ performing a tip inflation attack, while the rest of the network is honest with the rate  $\lambda_H=\lambda - \lambda_S$. The main difference in this scenario is the introduction of a limit $\Delta$ on the time that a block can stay as a tip before being dropped from the pool and, thus, being orphaned. 

 In this scenario, the asymptotic relation from \eqref{eq: theta3} still holds, but the tip time for each block is now bounded by $\Delta$. If we denote by $E_i$ a random variable representing the first arrival of the Poisson process with rate $r(t)=\lambda_H p(L(t_i+t))$, we have
\begin{equation}\label{eq: tauDelta}
    \tau_i \laweq h+\min\{E_i, \Delta\},
\end{equation}

Using \eqref{eq: tauDelta} and the asymptotic relation \eqref{eq: theta3} we get
\begin{align}
\nonumber L_0-h &= \lim_{\lambda\to \infty}\lim_{i\to \infty}\ex \tau_i-h\\
\nonumber&= \lim_{\lambda\to \infty}\lim_{i\to \infty}\ex\int \limits_{0}^{\Delta} x r(x) e^{-\int_0^x r(s) ds}dx\\
\label{eq: NHPoisson1}& \quad\quad +\Delta \lim_{\lambda\to \infty}\lim_{i\to \infty} \pr[E_i\geq \Delta].
\end{align}

From Proposition~\ref{L-prop2}, for every $x$ it holds $r(x)\to \lambda_H p(L_\infty)$  almost surely as $i \to \infty$, so we can use the same calculation as in \eqref{eq: L-limit0} for \eqref{eq: NHPoisson1} to get

\begin{align}
\nonumber L_0-h &= \lim_{\lambda\to \infty}\ex\int \limits_{0}^{\Delta} x \lambda_H p(L_\infty) e^{-x\lambda_H p(L_\infty)}dx\\
\nonumber & \quad\quad +\Delta \lim_{\lambda\to \infty} \ex e^{-\Delta\lambda_H p(L_\infty)}\\
\nonumber &=\ex\lim_{\lambda\to \infty}\frac{1-(\Delta\lambda_H p(L_\infty)+1)e^{-\Delta\lambda_H p(L_\infty)}}{\lambda_H p(L_\infty)}\\
\label{eq: NHPoisson2} & \quad +\Delta \ex\lim_{\lambda\to \infty} e^{-\Delta\lambda_H p(L_\infty)}.
\end{align}

From \eqref{eq: L-limit2} we know that $\lambda_H p(L_\infty)\to \mu k/L_0$ almost surely as $\lambda \to \infty$. Using this in \eqref{eq: NHPoisson2} yields
\begin{align}
\label{eq: tauDelta2} L_0-h&=\frac{L_0-(\Delta \mu k+L_0)e^{-\frac{\Delta \mu k}{L_0}}}{\mu k}+\Delta e^{-\frac{\Delta \mu k}{L_0}}.
\end{align}
Analysing the equation above as $L_0\to 0$ and $L_0\to \infty$, we can be sure that a solution always exist. Rearranging the terms of \eqref{eq: tauDelta2} finishes the proof of the theorem.

\end{proof}


\subsection{Expiration Orphanage Probability}

Now that we have a good understanding of the tip pool size, we can calculate the probability that a block gets orphaned in the tip pool due to expiration. 

\begin{theo}\label{theo: expiration probability}[Expiration Probability] Consider the same setup as in Theorem~\ref{theo: tip pool w/ expiration}. Then the probability that a block is dropped from the tip pool (i.e. expires) for staying in it over $\Delta$ units of time satisfies
\begin{align}
    \nonumber \lim \limits_{\lambda\to \infty}\lim \limits_{j\to \infty}&\pr[\text{Block }B_j\text{ expires}]= \exp \Big\{- \frac{\Delta \mu k}{L_0 }\Big\}\\
\label{expire-Ineq}&\leq\exp \Big\{- \frac{\Delta (\mu k-1)}{h }\Big\}.
\end{align}
\
\end{theo}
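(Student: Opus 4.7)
The plan is to directly exploit the Poisson-process representation of the tip's approval times that was set up in the proof of Theorem~\ref{theo: tip pool w/ expiration}. Recall that for block $B_i$, once the hidden-tip phase of duration $h$ has elapsed, the honest approvals of $B_i$ arrive according to a non-homogeneous Poisson process of rate $r(t)=\lambda_H p(L(t_i+t))$, and a block expires precisely when no honest approval arrives in a window of length $\Delta$, i.e. when $E_i\geq \Delta$ in the notation of \eqref{eq: tauDelta}. Therefore
\begin{equation*}
\pr[B_j\text{ expires}] = \ex\exp\Big\{-\int_0^\Delta \lambda_H p(L(t_j+s))\,ds\Big\}.
\end{equation*}

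Next I would pass to the double limit $j\to\infty$ (asymptotic stationarity from Theorem~\ref{L-thm1}) and $\lambda\to\infty$ (Proposition~\ref{L-prop2}). Exactly as in the derivation of \eqref{eq: tauDelta2}, we have $\lambda_H p(L_\infty)\to \mu k/L_0$ almost surely, so the integral inside the exponent converges to $\Delta \mu k/L_0$. Dominated convergence (the integrand is bounded by $1$) then justifies interchanging limit and expectation, yielding the equality
\begin{equation*}
\lim_{\lambda\to\infty}\lim_{j\to\infty}\pr[B_j\text{ expires}] = \exp\Big\{-\frac{\Delta\mu k}{L_0}\Big\}.
\end{equation*}

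To upgrade this identity to the explicit upper bound in \eqref{expire-Ineq}, I would compare the expiration-regime $L_0$ of Theorem~\ref{theo: tip pool w/ expiration} with the no-expiration value $h\mu k/(\mu k-1)$ from Theorem~\ref{theo: tip pool w/o expiration}. Rewriting \eqref{eq: L0} as $L_0\bigl(\mu k -1 + e^{-\Delta\mu k/L_0}\bigr)=\mu k h$ and noting $e^{-\Delta\mu k/L_0}>0$, we immediately get $L_0<h\mu k/(\mu k-1)$. Taking reciprocals and multiplying by $\mu k$ gives $\mu k/L_0 > (\mu k-1)/h$, and monotonicity of $x\mapsto e^{-\Delta x}$ delivers the stated bound.

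The main obstacle is the limit exchange inside the expectation: the rate $r(s)$ is a function of $L(t_j+s)$, which is itself random and only converges in an a.s.\ sense after both $j\to\infty$ and $\lambda\to\infty$. The cleanest way to handle it is to observe that the exponentiated integral lies in $[0,1]$, so once a.s.\ convergence of $\lambda_H p(L_\infty)$ is in hand (as already used in the proof of Theorem~\ref{theo: tip pool w/ expiration}), bounded convergence closes the argument; the remaining comparison step is a one-line manipulation of the fixed-point equation \eqref{eq: L0}.
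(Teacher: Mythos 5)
Your proposal is correct and follows essentially the same route as the paper's proof: both express the expiration event as the void probability of the non-homogeneous Poisson approval process, pass to the double limit via Proposition~\ref{L-prop2} and the convergence $\lambda_H p(L_\infty)\to\mu k/L_0$, and obtain the inequality by dropping the positive term $e^{-\Delta\mu k/L_0}$ in the fixed-point equation \eqref{eq: L0}. Your explicit invocation of bounded convergence to justify the limit--expectation interchange is a minor (and welcome) addition of rigour, but not a different argument.
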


The limit on $j$ represents the asymptotic stationarity of the process. Simulations from \cite{ferraro2022} show that the convergence of the process to its stationary measure is typically quick for a reasonable range of spammer proportion, and so the asymptotic results we obtain are close to what one expects in a realistic scenario. Now onto the proof of the theorem.
\begin{proof}
Recall that the process of approvals that a fixed block $B_j$ receives is a conditional non homogeneous Poisson process $P_j(t)$ with rate $\lambda_H p(L(t))$, hence the expiration event $E_j$ for block $B_j$ satisfies
\begin{equation}
    \lim_{j\to \infty} \pr[E_j]= \lim_{j\to \infty}  \pr[P_j(t_j+h,t_j+h+\Delta]=0].
\end{equation}
But $P_j(t_j+h,t_j+h+\Delta]$ is a Poisson random variable with parameter $\int_{0}^{\Delta} \lambda_H p(L(t_j+h+s))ds$ and thus by Proposition~\ref{L-prop2} we have
\begin{align*}    
    \lim_{\lambda\to \infty} \lim_{j\to \infty} \pr[E_j]&=\lim_{\lambda\to \infty}\lim_{j\to \infty}\ex e^{-\int_{0}^{\Delta} \lambda_H p(L(t_j+h+s))ds}\\
    &=\ex \lim_{\lambda\to \infty}\exp \Big\{-\Delta \lambda_H p(L_\infty)  \Big\}\\
    &=\exp\Big\{-\Delta \mu h/L_0\Big\}.
\end{align*}
The last relation comes from continuity and \eqref{eq: L-limit2}.

The inequality \eqref{expire-Ineq} is achieved by replacing $L_0$ by the value in Theorem \ref{theo: tip pool w/ expiration}, and using that
\[\mu k -1 \leq \mu k -1 +  \exp \Big\{- \frac{\Delta \mu k}{L_0 }\Big\}.\]
\end{proof}

\section{Simulation Studies and Validation}\label{sec: simulation}

We compare the derived models against studies using the open-sourced simulation code \cite{tangleSim}. 
We make the following assumptions in the simulator, which have previously also been utilised in, e.g.~\cite{Kusmierz2019,thetangle}. Blocks are created through a Poisson process, and we maintain a global view of the block DAG and the tip pool. To emulate the propagation of new blocks in the network, blocks are added to the global tip pool with a delay of $h$ after their creation. From the global tip pool, parent tips are selected during the tip selection mechanism. For each created block, we choose the honest tip selection with probability $\mu$ or the adversary tip selection otherwise. Blocks for which a referencing block is added to the tip pool or that are more than $\Delta$ old are removed from the tip pool. To obtain high precision, for each data point, we run the simulation $n=100$ times and where each run yields a block DAG with at least $300.000$ blocks. For the tip pool sizes, this yields an error (standard deviation) on the mean value of less than 1\%. For the orphanage rate, we indicate the standard deviation of the results in the figure. We study parameters, as shown in Table \ref{tab:default simulation values}.


\begin{figure}
    \centering
    \includegraphics[width=0.45\textwidth]{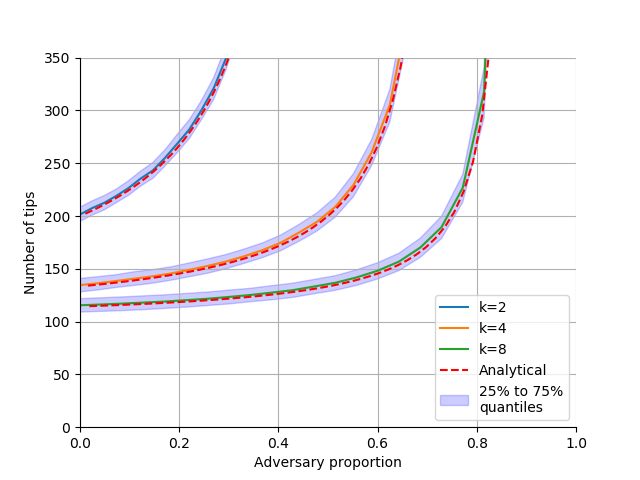}
    \caption{Tip pool sizes with the proportion of blocks issued by the  adversary. The mean values of simulations are compared with the analytical value for several $k$. The variance of the tip pool is shown through quantile ranges. }
    \label{fig:tipPool}
\end{figure}

\begin{table}[h]
    \centering
    \caption{Default parameters}
    \begin{tabular}{l|c}
         Parameter & Value  \\
         $\lambda$ & 100 blocks / $h$ \\
         $\Delta $ & 100 $h$\\
         $\mu$ & $\{0,..,1.\}$\\
         $k$ & $\{1,..,8\}$
    \end{tabular}
    \label{tab:default simulation values}
\end{table}

First, to illustrate the challenge of high adversary issuance rates, we study in Figure~\ref{fig:tipPool} the tip pool sizes considering no restriction on the expiration time, i.e.~$\Delta=\infty$. The analytical values agree well with the mean values of the simulation results. We also show the variance of the tip pool by providing the 25\% and  75\% quantiles. As by Theorem~\ref{L-thm1} the tip pool is no longer stationary for $\mu k>1$, which can be seen through the diverging tip pool size for  $\mu$ close to those values. We can see that by increasing the number of parent references, we can substantially increase the robustness of the protocol against this type of attack.

We counter the divergence by setting $\Delta$ to a finite value. In Figure~\ref{fig:tipPool_Delta}, we show the tip pool sizes with the proportion of blocks issued by the adversary for $\Delta=100h$.
Due to the expiration, the tip pool size converges, and the obtained values agree with the model. Clearly, with the introduction of the expiration, the tip pool size is limited by the number of blocks issued within $\Delta$, even if the issuance of blocks is dominated by the adversary.

\begin{figure}[t]
    \centering
    \includegraphics[width=0.45\textwidth]{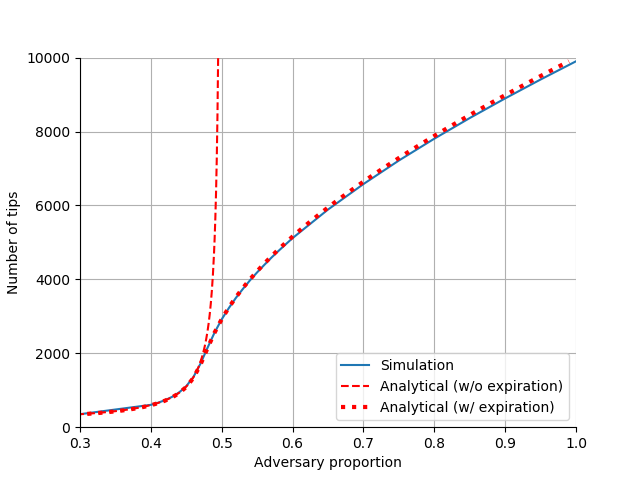}
    \caption{Tip pool sizes with the proportion of blocks issued by the adversary, for $\Delta=100h$. }
    \label{fig:tipPool_Delta}
\end{figure}

Figure ~\ref{fig:orphanage_Delta} shows the orphanage rates with the proportion of blocks issued by the adversary. For the simulation results, the standard deviation of the error is displayed, which is low for all data points due to a large number of samples. 
As can be seen, the analytical model ($L_0$ with expiration) agrees well with the simulation-obtained values. 
For $\mu>1/k$, the probability for orphanage reduces drastically and the rate of the exponential decay matches for all curves.

Lastly, we also consider the values for $L_0$ obtained from Theorem~\ref{theo: tip pool w/o expiration} (without expiration) as input to Theorem~\ref{theo: expiration probability} to support the later discussion in Section~\ref{sec: discussion} on future cone orphanage.


\begin{figure}[t]
    \centering
    \includegraphics[width=0.45\textwidth]{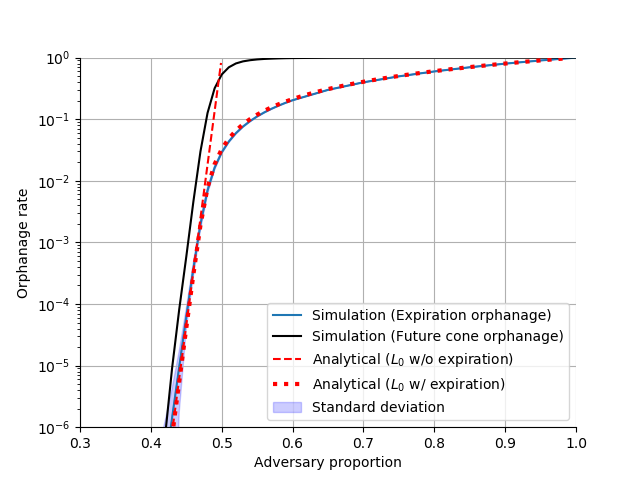}
    \caption{Orphanage probability with the proportion of blocks issued by the adversary, for $\Delta=100h$, $k=2$. }
    \label{fig:orphanage_Delta}
\end{figure}

\section{Discussion}\label{sec: discussion}

In this section, we provide an outlook on the topic of \textit{future cone orphanage}, which extends the concept of expiration orphanage. 


As mentioned earlier, it is possible for a block that has obtained one or more approvers that all tips of its future cone become expiration orphaned. Therefore, the DAG discontinues to grow on top of such a block. If the block does not reach a sufficient number of blocks referencing it by the time the tips of its future cone are orphaned, it cannot reach a confirmation status itself. A block for which the future cone is discontinued in such a way is said to be subject to future cone orphanage. 

Future cone orphanages can also lead to agreement failures. Consider, for example, the process of a new node joining the network. Such a node may learn about new blocks, which are tips, through the connection to its peers. It then recursively can request missing blocks in the past cone of these new blocks through a process called solidification. As a consequence, only tips that are reachable from the current set of gossiped blocks are known to the joining node. However, it is possible that some blocks may get confirmed despite becoming future cone orphaned, leading to agreement failure between the newly joined node and the rest of the network. 

We study the orphanage probability in simulation studies, similar to the previous section. In
Figure ~\ref{fig:orphanage_Delta} we consider only blocks that are at least $\tau=3\Delta$ in the past of the last tips. We remove more recent blocks from the analysis since they do not represent the future cone orphanage probability well. More specifically, blocks that only recently were removed from the tip pool are more likely to still have tips in their future cone that are not yet expired. As such, the calculated value of the future cone orphanage expresses a lower bound since even from the considered blocks a proportion may become future cone orphaned, eventually. As can be seen from the figure, if we consider the value for $L_0$ calculated by Theorem~\ref{theo: tip pool w/o expiration} in Theorem~\ref{theo: expiration probability}, i.e. ``w/o expiration'', we are provided with  better estimates for the future cone orphanage. 


\section{Conclusion}\label{sec: conclusion}

In DAG-based DLTs, blocks can be added asynchronously and concurrently by referencing previous blocks. To preserve good performances of the resulting block DAG, the stability of the tip pool is important. We presented an attack that attempts to inflate the tip pool and consequently results in resource depletion of the nodes. To protect the node, we propose the introduction of an expiration time on the tips. 
We present models to analytically predict the tip pool sizes in the scenarios with and without an expiration time. We prove that, without expiration, the attack manages to break the stationarity (stability) of the tip pool if the adversary obtains sufficient writing access. On the other hand, by  introducing expiration times, the tip pool size remains limited, which may allow nodes to recover if the attack is finite in time. Furthermore, we provide theoretical results for the expiration orphanage rates and show that this rate reduces exponentially if the adversary issues blocks at a rate less than some critical value.

We compare the analytical results with simulation results using a simulation tool. Since the expiration orphanage can be extended to the future cone orphanage, we  utilise the simulator also to study the future cone orphanage.

\bibliographystyle{IEEEtran}
\bibliography{bibliography}


\end{document}